\let\SF@@footnote\footnote
\def\footnote{\ifx\protect\@typeset@protect
    \expandafter\SF@@footnote
  \else
    \expandafter\SF@gobble@opt
  \fi
}
\def\csname SF@gobble@opt \endcsname{\@ifnextchar[
  \SF@gobble@twobracket
  \@gobble
}
\edef\SF@gobble@opt{\noexpand\protect
  \expandafter\noexpand\csname SF@gobble@opt \endcsname}
\def\SF@gobble@twobracket[#1]#2{}
\providecommand{\tabularnewline}{\\}
\numberwithin{equation}{section}
\numberwithin{figure}{section}
\theoremstyle{plain}
\newtheorem{thm}{\protect\theoremname}
  \theoremstyle{definition}
  \newtheorem{example}[thm]{\protect\examplename}
  \theoremstyle{plain}
  \newtheorem*{prop*}{\protect\propositionname}
  \providecommand{\examplename}{Example}
  \providecommand{\propositionname}{Proposition}
\providecommand{\theoremname}{Theorem}
\begin{document}

\title{\noindent Bell's theorem in automata theory}

\author{Michael Zirpel, mz@mzirpel.de}
\maketitle
\begin{abstract}
Bell's theorem is reformulated and proved in the pure mathematical
terms of automata theory, avoiding any physical or ontological notions.
It is stated that no pair of finite probabilistic sequential machines
can reproduce in its output the statistical results of the quantum-physical
Bell test experiment if each machine is independent of the respective
remote input. 

Keywords: Bell's theorem, Bell test experiment, Bell-CHSH inequality,
probabilistic sequential machines, quantum sequential machines.
\end{abstract}

\section{Introduction}

Bell's theorem \citep{Bell1964} was praised as one of the most profound
discoveries of science \citep{Stapp75}. Even more than 50 years after
its discovery there is a vivid discussion of its meaning and its impact
in a plenty of scientific papers. And the thought experiment on which
the theorem is based, the \emph{Bell test} experiment, is performed
each year in new variants (e.g., \href{https://thebigbelltest.org}{https://thebigbelltest.org}). 

Bell's theorem states that ``no physical theory which is realistic
and also local in a specified sense can agree with all of the statistical
implications of Quantum Mechanics'' \citep{Shimony2013}. However,
the meaning of this proposition is not easy to understand, neither
its consequences for cryptographic protocols (e.g., \citealp{Ekert1991}). 

In the following the theorem will be reformulated for automata theory
in a pure mathematical way\footnote{The idea to use computer or electronic circuits to explain the content
of Bell's theorem is not new (e.g., \citealp{Gill2014}) and inspired
this paper. But we consider abstract mathematical automata instead
of real physical devices. } without any physical or ontological notions. For that purpose the
arrangement of an ideal Bell test experiment is represented by a pair
of automata, deterministic or probabilistic sequential machines, which
produce an output after each input given by local operators or independent
random generators. The theorem states that no such pair can reproduce
the statistical results of the quantum-physical Bell test experiment
in its output data if each machine is independent on the input of
the respective remote machine, or loosely speaking: without data transmission
between the remote sides.

After presenting a short sketch of the physical Bell test experiment,
we will give a brief introduction to the theory of sequential machines
and then prove the theorem.

\section{Bell test experiment}

The fundamental idea of the Bell test experiment has a long lasting
history: Einstein, Podolsky, and Rosen (EPR \citeyear{EPR1935}) developed
a quantum-physical thought experiment that displayed strange non-local
correlations between the results of remote measurements on a pair
of particles, depending on the choice of the measured quantity. The
setting of this thought experiment was simplified by \citet{Bohm1951}
and inspired \citet{Bell1964} to his theorem. Experimentalists like
Clauser, Horne, Shimony, and Holt (CHSH \citeyear{CHSH1969}) transformed
Bell's thought experiment into a real one, using photon pairs, with
the first sufficient realization by \citet{Aspect1982}. 

For our purpose a coarse sketch of an ideal Bell test experiment without
any physical details is sufficient.\footnote{A good introduction to the physical thought experiment and the theorem
is given in \citet{Bell1981}. The SEP article \citep{Shimony2013}
is closer to our considerations.} 

\noindent \begin{center}
\begin{figure}[H]
\noindent \begin{centering}
\includegraphics{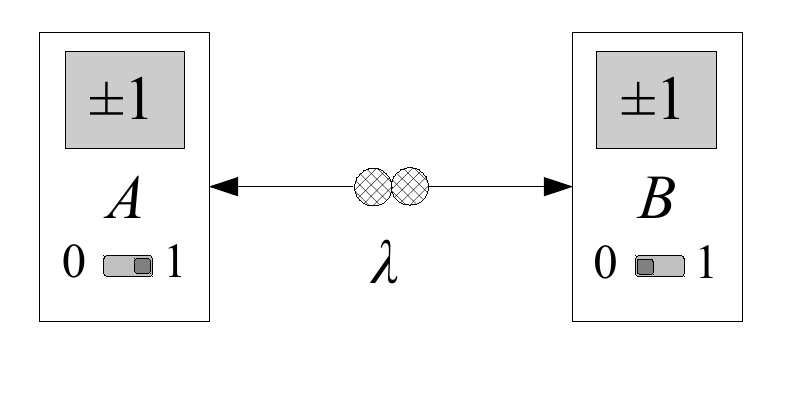}
\par\end{centering}
\caption{\label{BellTestExp}Bell test experiment}
\end{figure}
\par\end{center}

A single run of the experiment starts by sending a pair of particles
$\lambda$ from a source, each particle in another direction, to the
measurement devices $A$ and $B$ (fig.\ \ref{BellTestExp}). Each
measurement device has an input switch with two positions $0$, $1$,
where local operators or independent random generators can select
one of two different measurements to be performed. So on each side
one of two possible quantities $A_{0}$, $A_{1}$, respectively $B_{0}$,
$B_{1}$, is measured. The display of the apparatus shows the measurement
result, which is in all cases $-1$ or $1$.

This arrangement enables the measurement of one of four possible pairs
$(A_{0},B_{0})$, $(A_{0},B_{1})$, $(A_{1},B_{0})$, $(A_{1},B_{1})$
in each run of the experiment, from which the corresponding product
$A_{0}B_{0}$, $A_{0}B_{1}$, $A_{1}B_{0}$, or $A_{1}B_{1}$ is computed,
which has the value $-1$ or $1$. The procedure will be repeated
with different random measurement selections. After several runs the
mean values of the products $\overline{A_{0}B_{0}}$, $\overline{A_{0}B_{1}}$,
$\overline{A_{1}B_{0}}$, $\overline{A_{1}B_{1}}$ are used to compute
the following expression
\[
\overline{A_{0}B_{0}}+\overline{A_{0}B_{1}}+\overline{A_{1}B_{0}}-\overline{A_{1}B_{1}}\textrm{,}
\]
which in the long run should approximate the theoretical given expectation
value 
\[
\bigl\langle E_{\mathrm{CHSH}}\bigr\rangle=\bigl\langle A_{0}B_{0}\bigr\rangle+\bigl\langle A_{0}B_{1}\bigr\rangle+\bigl\langle A_{1}B_{0}\bigr\rangle-\bigl\langle A_{1}B_{1}\bigr\rangle=\bigl\langle A_{0}B_{0}+A_{0}B_{1}+A_{1}B_{0}-A_{1}B_{1}\bigr\rangle\textrm{.}
\]

In probability theory for any four random variables $A_{0}$, $A_{1}$,
$B_{0}$, $B_{1}$ with the image $\{-1,1\}$ on an event space $(\Omega,\mathcal{A})$
the absolute value of this expectation value is bounded according
the \emph{Bell-CHSH inequality} \citep{CHSH1969} by the value $2$
(cf.\ app.\ A), so for any probability measure $\mu$ on $(\Omega,\mathcal{A})$
\[
\bigl|\bigl\langle E_{\mathrm{CHSH}}\bigr\rangle\bigr|\leq2\textrm{.}
\]

However, quantum theory predicts in some cases values above $\text{2}$
(and below or equal $2\sqrt{2}$).\footnote{In quantum theory the four measurable quantities are represented by
four self-adjoint operators with the spectrum $\{-1,1\}$ on a Hilbert
space $\mathcal{H}$. The quantum-theoretical expectation value of
the corresponding Bell-CHSH expression has a higher bound, the \emph{Tsirelson
bound} $2\sqrt{2}$ (\citealp{Cirelson1980}).} That was confirmed by the measurement results of various quantum-physical
Bell test experiments (e.g., \citealt{Aspect1982}). So the \emph{violation}
of the Bell-CHSH inequality is an experimental fact of quantum physics.

\section{Probabilistic sequential machines\protect\footnote{This section is based on \citet{Salomaa1969} }}

A \emph{sequential machine} (SM) is an abstract automaton that after
each input produces an output and may change its internal state. $I$,
$O$, $S$ denote the sets of \emph{input symbols}, \emph{output symbols}
(also called input and output alphabet) and \emph{states}. A SM is
called \emph{finite} if all these sets $I$, $O$, $S$ are finite. 

A \emph{deterministic SM} (DSM) is defined by a quintuple $(I,O,S,s_{0},f)$
where $s_{0}\in S$ is the initial state and the \emph{deterministic
machine function} 
\[
f\colon I\times S\rightarrow O\times S,(i,s)\mapsto(o,t)=f(i,s)
\]
determines the output $o$ and the new state $t$ after an input $i$
in state $s$. For a \emph{probabilistic SM} the new state $t$ and
the output symbol $o$ are randomly chosen after each input. 

A \emph{finite probabilistic SM} (FPSM) is defined by a quintuple
$(I,O,S,p_{0},p)$ with finite $I$, $O$, $S$, an \emph{initial
state distribution function} 
\[
p_{0}\colon S\rightarrow[0,1],s\mapsto p_{0}(s)
\]
with 
\[
\sum_{s\in S}p_{0}(s)=1,
\]
and a \emph{probabilistic machine function} 
\[
p\colon O\times S\times I\times S\rightarrow[0,1],(o,t,i,s)\mapsto p(o,t\mid i,s)\textrm{,}
\]
which gives the probability to get the output $o$ and the new state
$t$ after the input $i$ in state $s$, with
\[
\sum_{o\in O}\sum_{t\in S}p(o,t\mid i,s)=1
\]
for all $i\in I,s\in S$. 

A FPSM $(I,O,S,p_{0},p)$ is \emph{deterministic} if the image of
the functions $p_{0}$ and $p$ is $\{0,1\}.$ In that case an\emph{
}equivalent finite DSM $(I,O,S,s_{0},f)$ is defined by the initial
state $s_{0}\in S$ which is uniquely determined by $p_{0}(s_{0})=1$,
and the function $f$ which is given by the set of pairs $((i,s)\mapsto(o,t))$
with $p(o,t\mid i,s)=1$.\footnote{Also the more popular Moore and Mealy machines can be considered as
FPSMs with special forms of the probabilistic machine function $p$
(see \citealp{Salomaa1969}).} 

\section{Simulation of Bell test experiments with FPSMs}

The theory of FPSMs is versatile enough to describe any simulation
of the Bell test experiment with a computer or an electronic circuit.
We start with a single FPSM for the simulation where the output is
a pair $o=(A,B)\in O=\{-1,1\}^{2}$ that represents the measurement
results. The input is given as a triple $i=(a,b,\lambda)\in I=\{0,1\}^{2}\times\varLambda$,
where $a$ and $b$ represent the experimenters choices and $\lambda\in\varLambda$
some not further specified properties of the particle pair, which
can be used like the internal states $S$ for a computational model
of the experiment. The simulation FPSM is denoted by 
\[
M=(\{0,1\}^{2}\times\varLambda,\{-1,1\}^{2},S,p_{0},p)\textrm{.}
\]
\begin{figure}[h]
\noindent \begin{centering}
\includegraphics[scale=0.5]{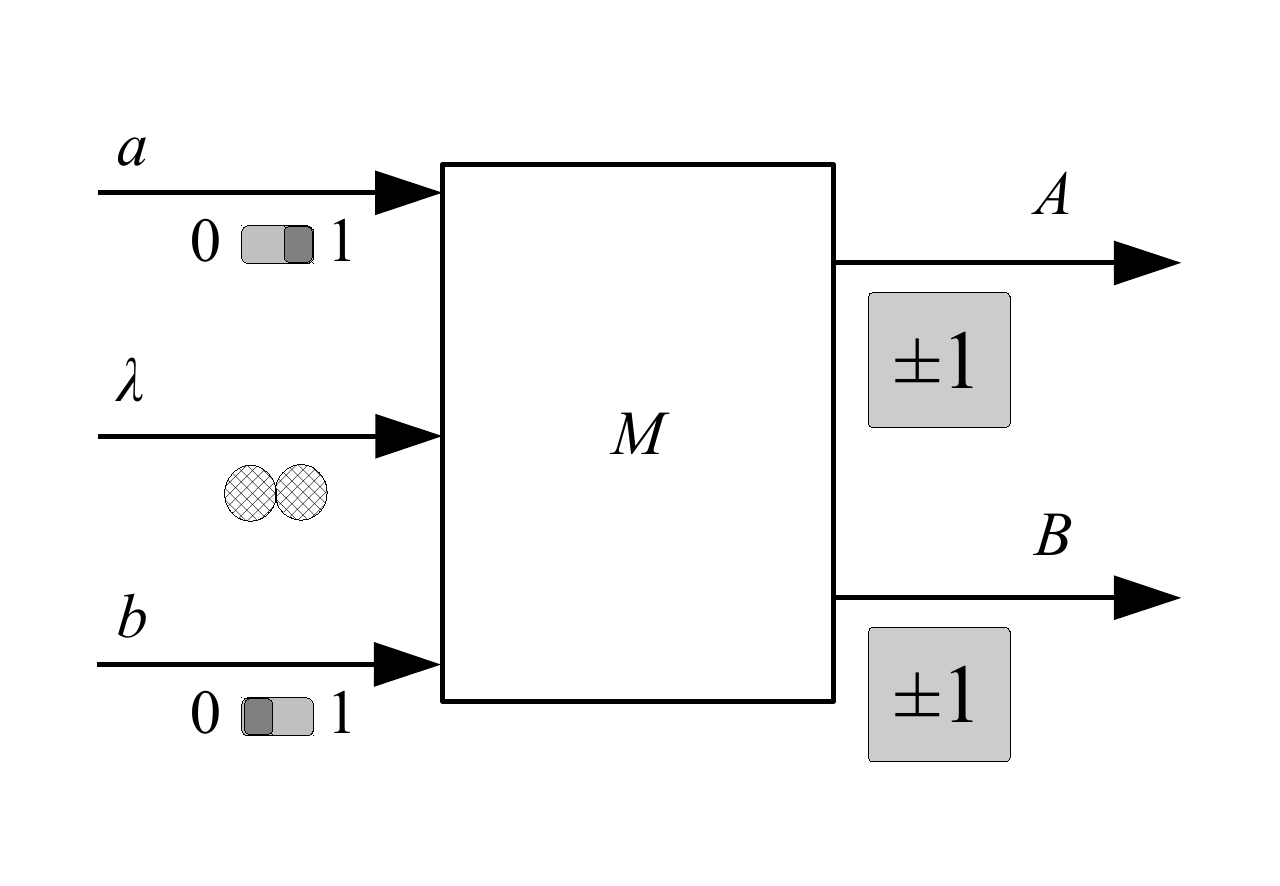}
\par\end{centering}
\caption{Simulation of the Bell test experiment with a FPSM\label{fig:Simulation1FPSM} }
\end{figure}

It is obvious that the FPSM simulation has less constraints then the
original experiment: the outputs may not represent four measurable
quantities $A_{0},A_{1},B_{0},B_{1}$. So we have to use a slightly
more general notation. 

\subsection{Simulation protocol}

In each run of the simulation the FPSM is prepared randomly to an
initial state $s\in S$ according to the probability distribution
$p_{0}$, and an input symbol $\lambda\in\varLambda$ is entered,
randomly selected according to a probability distribution $p_{\varLambda}$.
Furthermore the input symbols $a,b\in\{0,1\}$ are entered by the
operators or automatically by independent random generators. Then
the output symbols $A,B\in\{-1,1\}$ (and the new state $t\in S$)
are given by the machine according to the probabilistic machine function
$p$. 

The product of the output symbols $A\cdot B$ is recorded by the operator
together with the corresponding pair of input symbols $(a,b)$

\[
\left.AB\right.\mid_{a,b}
\]
After a series of runs (or multiple series for the different combinations
of the values of $a$, $b$) the mean values of the recorded products
for the different input symbols are computed

\[
\overline{\left.AB\right.}\mid_{0,0}\textrm{, }\overline{\left.AB\right.}\mid_{0,1}\textrm{, }\overline{\left.AB\right.}\mid_{1,0}\textrm{, }\overline{\left.AB\right.}\mid_{1,1}\textrm{,}
\]
as well as the Bell-CHSH expression

\[
\overline{E}_{\mathrm{CHSH}}=\overline{\left.AB\right.}\mid_{0,0}+\overline{\left.AB\right.}\mid_{0,1}+\overline{\left.AB\right.}\mid_{1,0}-\overline{\left.AB\right.}\mid_{1,1}\textrm{.}
\]
In the special case of a Bell test simulation this value should approximate
in the long run the theoretical given expectation value of the Bell-CHSH
expression

\[
\bigl\langle E_{\textrm{CHSH}}\bigr\rangle=\bigl\langle A_{0}B_{0}\bigr\rangle+\bigl\langle A_{0}B_{1}\bigr\rangle+\bigl\langle A_{1}B_{0}\bigr\rangle-\bigl\langle A_{1}B_{1}\bigr\rangle\textrm{.}
\]
But in general the corresponding theoretical expression for the PSM
is a sum of conditional expectation values
\[
\tilde{E}_{\mathrm{CHSH}}=\bigl\langle AB\bigr\rangle\mid_{0,0}+\bigl\langle AB\bigr\rangle\mid_{0,1}+\bigl\langle AB\bigr\rangle\mid_{1,0}-\bigl\langle AB\bigr\rangle\mid_{1,1}\textrm{,}
\]
which can be computed by
\[
\bigl\langle AB\bigr\rangle\mid_{a,b}=\sum_{A\in\{-1,1\}}\sum_{B\in\{-1,1\}}AB\cdot q(A,B|a,b)
\]
with the conditional output probability to get the output $(A,B)$
after input $(a,b)$ 
\[
q\left(A,B\mid a,b\right)=\sum_{\lambda\in\varLambda}\sum_{s\in S}\sum_{t\in S}p(A,B,t\mid a,b,\lambda,s)p_{\Lambda}(\lambda)p_{0}(s)\textrm{.}
\]
\begin{example}
The following table gives the probabilistic machine functions of several
simple FPSMs for the simulation of the Bell test experiment. 

\noindent {\footnotesize{}}
\begin{table}[H]
\noindent {\footnotesize{}}%
\begin{tabular}{ccccccc}
\toprule 
 & {\footnotesize{}$q(A,B\mid a,b)=p(A,B,t\mid a,b,\lambda,s)$} & {\footnotesize{}$\bigl\langle AB\bigr\rangle|{}_{0,0}$} & {\footnotesize{}$\bigl\langle AB\bigr\rangle|_{0,1}$} & {\footnotesize{}$\bigl\langle AB\bigr\rangle|_{1,0}$} & {\footnotesize{}$\bigl\langle AB\bigr\rangle|_{1,1}$} & {\footnotesize{}$\tilde{E}_{\mathrm{CHSH}}$}\tabularnewline
\midrule
{\footnotesize{}$M_{1}$} & {\footnotesize{}$\frac{1}{4}$} & {\footnotesize{}$0$} & {\footnotesize{}$0$} & {\footnotesize{}$0$} & {\footnotesize{}$0$} & {\footnotesize{}$0$}\tabularnewline
{\footnotesize{}$M_{2}$} & {\footnotesize{}$\delta_{A,1}\delta_{B,1}$} & {\footnotesize{}$1$} & {\footnotesize{}1} & {\footnotesize{}$1$} & {\footnotesize{}$1$} & {\footnotesize{}$2$}\tabularnewline
{\footnotesize{}$M_{3}$} & {\footnotesize{}$\delta_{A,1}\delta_{B,1-2ab}$} & {\footnotesize{}$1$} & {\footnotesize{}1} & {\footnotesize{}$1$} & {\footnotesize{}$-1$} & {\footnotesize{}$4$}\tabularnewline
{\footnotesize{}$M_{4}$} & {\footnotesize{}$\frac{1}{2}\delta_{A,1}\delta_{B,1-2ab}+\frac{1}{2}\delta_{A,-1}\delta_{B,2ab-1}$} & {\footnotesize{}$1$} & {\footnotesize{}1} & {\footnotesize{}$1$} & {\footnotesize{}$-1$} & {\footnotesize{}$4$}\tabularnewline
{\footnotesize{}$M_{5}$} & {\footnotesize{}$\frac{2-\sqrt{2}}{8}+\frac{\sqrt{2}}{8}(2\delta_{A,B}+2ab-4ab\delta_{A,B})$} & {\footnotesize{}$\frac{\sqrt{2}}{2}$} & {\footnotesize{}$\frac{\sqrt{2}}{2}$} & {\footnotesize{}$\frac{\sqrt{2}}{2}$} & {\footnotesize{}$-\frac{\sqrt{2}}{2}$} & {\footnotesize{}$2\sqrt{2}$}\tabularnewline
\bottomrule
\end{tabular}{\footnotesize \par}

{\footnotesize{}\caption{Probabilistic machine functions and expectations of the example FPSMs{\footnotesize{} }}
}{\footnotesize \par}
\end{table}
{\footnotesize \par}

\noindent There is no dependence on internal states or $\lambda$-input,
so we assume $S=\{s_{0}\}$ with $p_{0}(s_{0})=1$ and $\varLambda=\{\lambda_{0}\}$
with $p_{\Lambda}(\lambda_{0})=1$. In this case the probabilistic
machine function is equal to the conditional output probability $q(A,B\mid a,b)=p(A,B,t\mid a,b,\lambda,s)$
for all $\lambda\in\varLambda$; $s,t\in S$. $\delta_{x,y}$ is the
Kronecker symbol and has the value $1$ if $x=y$ and $0$ otherwise.

The output of FPSM $M_{1}$ is evenly distributed and uncorrelated
random, whereas $M_{2}$ gives the constant output $(1,1)$. FPSM
$M_{3}$ modifies the constant output $1$ in the case that $a$ and
$b$ have the value $1.$ FPSM $M_{4}$ is a mixture of $M_{3}$ and
its negative counterpart and simulates a \emph{Popescu-Rohrlich box}
(\emph{PR box}). FPSM $M_{5}$ is the simulation of a quantum-physical
Bell test.

The FPSMs $M_{2}$ and $M_{3}$ are deterministic. The value of $\tilde{E}_{\textrm{CHSH}}$
indicates that the FPSMs $M_{3}$, $M_{4}$, $M_{5}$ violate the
Bell-CHSH inequality.

The free web app \href{https://bell.qlwi.de}{https://bell.qlwi.de}
can be used to perform Bell test simulations with these FPSMs on any
PC, tablet, or smartphone with an up-to-date internet browser.
\end{example}

\subsection{Machine composition and stochastic independence of the machine functions}

The example FPSM $M_{5}$ demonstrates that it is possible to simulate
a Bell experiment with a FPSM and get the same statistical results
as with a quantum-physical experiment.

However, to shed some light on Bell's theorem the simulation has to
be performed with a pair of two separated FPSMs, as sketched by the
circuit diagram in fig.\ \ref{fig:Simulation2FPSMs}). 
\begin{center}
\begin{figure}[h]
\begin{centering}
\includegraphics[scale=0.5]{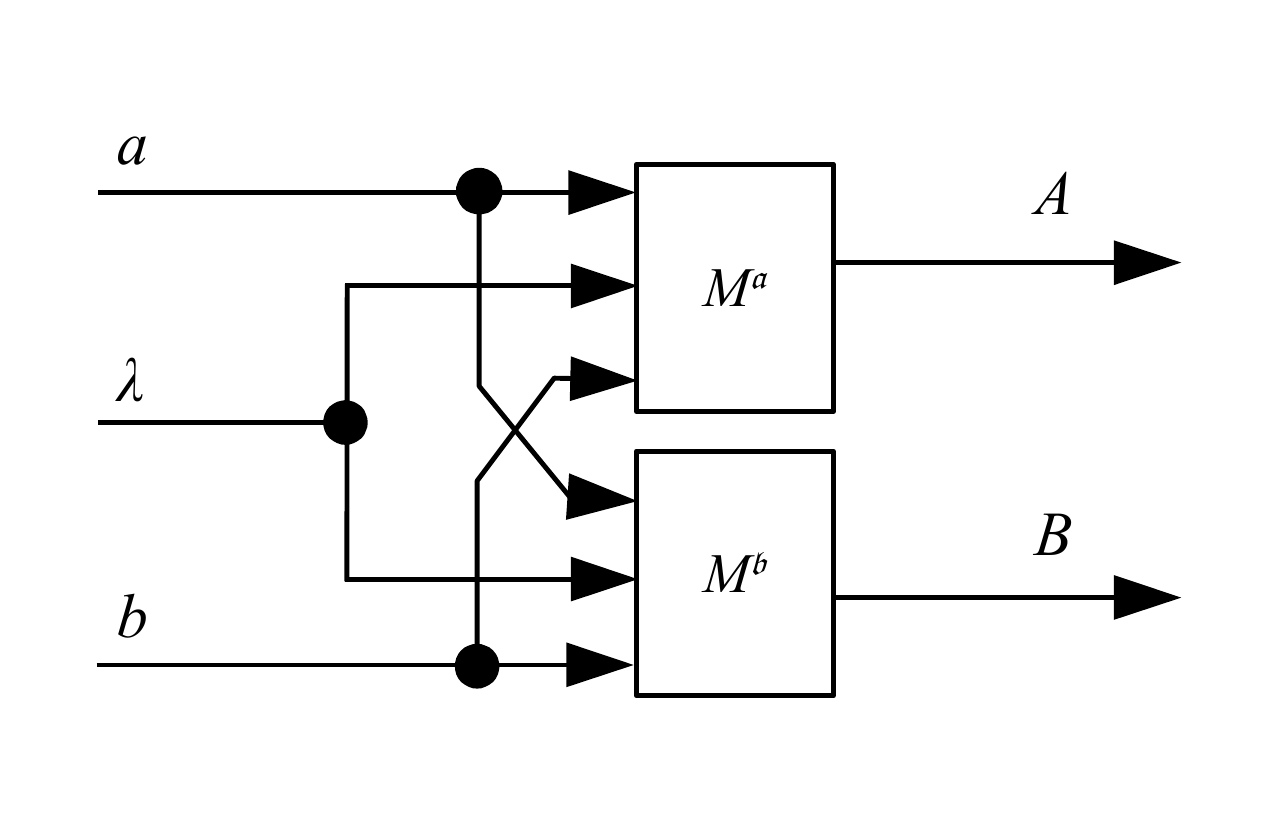}
\par\end{centering}
\caption{Simulation of the Bell test experiment with two FPSMs\label{fig:Simulation2FPSMs}}
\end{figure}
\par\end{center}

Both FPSMs\footnote{We use the letters $\mathfrak{a}$,$\mathfrak{b}$ in the upper position
only as label not as exponent or summation index. } 
\[
M^{\mathfrak{a}}=(\{0,1\}^{2}\times\varLambda,\{-1,1\},S^{\mathfrak{a}},p_{0}^{\mathfrak{a}},p^{\mathfrak{a}})
\]
\[
M^{\mathfrak{b}}=(\{0,1\}^{2}\times\varLambda,\{-1,1\},S^{\mathfrak{b}},p_{0}^{\mathfrak{b}},p^{\mathfrak{b}})
\]
receive the same input (this also ensures synchronization). But each
one gives only one output $A\in\{-1,1\}$, respectively $B\in\{-1,1\}.$
The pair $(M^{\mathfrak{a}},M^{\mathfrak{b}})$ can be considered
as a compound FPSM 
\[
M^{\mathfrak{ab}}=(\{0,1\}^{2}\times\varLambda,\{-1,1\}^{2},S^{\mathfrak{a}}\times S^{\mathfrak{b}},p_{0}^{\mathfrak{a}}p_{0}^{\mathfrak{b}},p^{\mathfrak{a}}p^{\mathfrak{b}})\textrm{,}
\]
where the probabilistic machine function and the initial state distribution
function are given as products of the corresponding functions of the
components, which reflects the \emph{independence} of the machines.
For that reason not every FPSM can be replaced by a pair. 
\begin{example}
The FPSMs $M_{4}$ and $M_{5}$ of example 1 cannot be replaced by
pair, but the FPSMs $M_{1}$, $M_{2}$, and $M_{3}$ can. If we assume
$S^{\mathfrak{a}}=\{s_{0}^{\mathfrak{a}}\}$, $S^{\mathfrak{b}}=\{s_{0}^{\mathfrak{b}}\}$
and $s_{0}=(s_{0}^{\mathfrak{a}},s_{0}^{\mathfrak{b}})$ with $p_{0}^{\mathfrak{a}}(s_{0}^{\mathfrak{a}})=p_{0}^{\mathfrak{b}}(s_{0}^{\mathfrak{b}})=1$,
then $M_{1}=(\{0,1\}^{2}\times\{\lambda_{0}\},\{-1,1\}^{2},\{s_{0}\},p_{1},1)$
can be replaced by a pair $(M_{1}^{\mathfrak{a}},M_{1}^{\mathfrak{b}})$
with the probabilistic machine functions $p_{1}^{\mathfrak{a}}=\frac{1}{2}$
and $p_{1}^{\mathfrak{b}}=\frac{1}{2}$ because $p_{1}=p_{1}^{\mathfrak{a}}p_{1}^{\mathfrak{b}}$.
Similarly, $M_{2}$ can be replaced by $(M_{2}^{\mathfrak{a}},M_{2}^{\mathfrak{b}})$
with $p_{2}^{\mathfrak{a}}=\delta_{A,1}$ and $p_{2}^{\mathfrak{b}}=\delta_{B,1}$,
and $M^{3}$ by $(M_{3}^{\mathfrak{a}},M_{3}^{\mathfrak{b}})$ with
$p_{3}^{\mathfrak{a}}=\delta_{A,1}$ and $p_{3}^{\mathfrak{b}}=\delta_{B,1-2ab}$
. 
\end{example}

\subsection{Functional independence from the remote inputs and the Bell-CHSH
inequality}

Now we consider the case that the machine $M^{\mathfrak{a}}$ does
not depend on input $b$ and the machine $M^{\mathfrak{b}}$ does
not depend on input $a$. In this case the vertical connections can
be removed from the circuit diagram (dotted lines in fig.\ \ref{fig:Simulation2FPSMsNoconn}).
\noindent \begin{center}
\begin{figure}[H]
\noindent \begin{centering}
\includegraphics[scale=0.5]{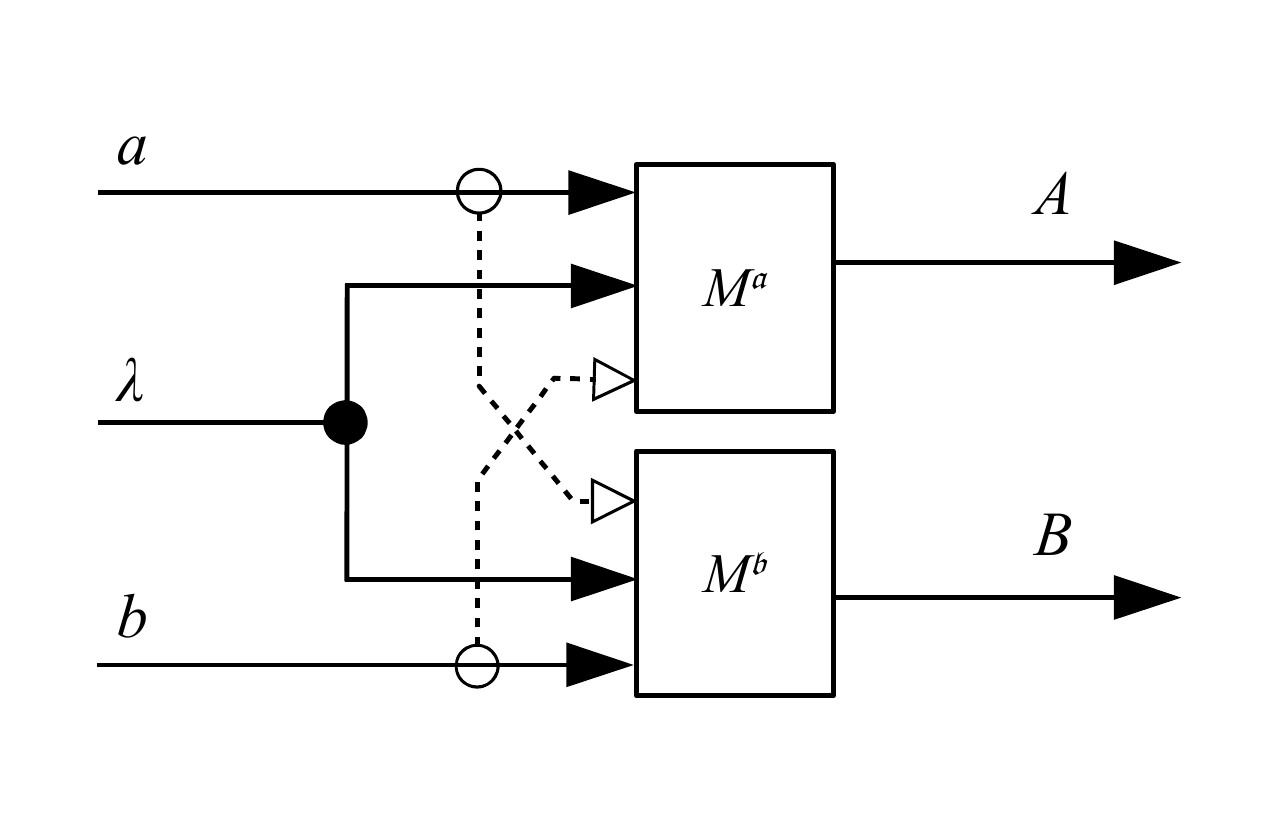}
\par\end{centering}
\caption{Simulation with two FPSMs without dependence on the remote inputs\label{fig:Simulation2FPSMsNoconn}}
\end{figure}
\par\end{center}
\begin{prop*}
If $p^{\mathfrak{a}}$ is not dependent on selection input $b$ and
$p^{\mathfrak{b}}$ is not dependent on selection input $a$, i.e.,
\[
p^{\mathfrak{a}}(A,t^{\mathfrak{a}}\mid a,b,\lambda,s^{\mathfrak{a}})=p^{\mathfrak{a}}(A,t^{\mathfrak{a}}\mid a,0,\lambda,s^{\mathfrak{a}})=p^{\mathfrak{a}}(A,t^{\mathfrak{a}}\mid a,1,\lambda,s^{\mathfrak{a}})
\]
\[
p^{\mathfrak{b}}(B,t^{\mathfrak{b}}\mid a,b,\lambda,s^{\mathfrak{b}})=p^{\mathfrak{b}}(B,t^{\mathfrak{b}}\mid0,b,\lambda,s^{\mathfrak{b}})=p^{\mathfrak{b}}(B,t^{\mathfrak{b}}\mid1,b,\lambda,s^{\mathfrak{b}})
\]
for all $A,B\in\{-1,1\}$; $a,b\in\{0,1\}$; $\lambda\in\varLambda$;$s^{\mathfrak{a}},t^{\mathfrak{a}}\in S^{\mathfrak{a}};s^{\mathfrak{b}},t^{\mathfrak{b}}\in S^{\mathfrak{b}}$,
then the Bell-CHSH expression will fulfill the Bell-CHSH inequality
\[
\bigl\langle AB\bigr\rangle\mid_{0,0}+\bigl\langle AB\bigr\rangle\mid_{0,1}+\bigl\langle AB\bigr\rangle\mid_{1,0}-\bigl\langle AB\bigr\rangle\mid_{1,1}\;\leq\,2\textrm{.}
\]
\end{prop*}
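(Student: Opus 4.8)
\textit{Proof proposal.} The plan is to use the product structure of the compound machine $M^{\mathfrak{ab}}$ together with the stated functional independence to exhibit the common input $\lambda$ as a ``hidden variable'' through which the conditional correlations factorize, and then to apply the elementary Bell-CHSH bound pointwise in $\lambda$.

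First I would expand the conditional output probability using $S=S^{\mathfrak{a}}\times S^{\mathfrak{b}}$, $p_{0}=p_{0}^{\mathfrak{a}}p_{0}^{\mathfrak{b}}$, $p=p^{\mathfrak{a}}p^{\mathfrak{b}}$ and separate the two factors:
\[
q(A,B\mid a,b)=\sum_{\lambda\in\varLambda}p_{\Lambda}(\lambda)\Bigl(\sum_{s^{\mathfrak{a}},t^{\mathfrak{a}}}p^{\mathfrak{a}}(A,t^{\mathfrak{a}}\mid a,b,\lambda,s^{\mathfrak{a}})p_{0}^{\mathfrak{a}}(s^{\mathfrak{a}})\Bigr)\Bigl(\sum_{s^{\mathfrak{b}},t^{\mathfrak{b}}}p^{\mathfrak{b}}(B,t^{\mathfrak{b}}\mid a,b,\lambda,s^{\mathfrak{b}})p_{0}^{\mathfrak{b}}(s^{\mathfrak{b}})\Bigr)\textrm{.}
\]
By hypothesis the first inner sum is independent of $b$ and the second is independent of $a$; write them as $q^{\mathfrak{a}}(A\mid a,\lambda)$ and $q^{\mathfrak{b}}(B\mid b,\lambda)$. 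Summing the normalization $\sum_{A,t^{\mathfrak{a}}}p^{\mathfrak{a}}(A,t^{\mathfrak{a}}\mid a,b,\lambda,s^{\mathfrak{a}})=1$ against $p_{0}^{\mathfrak{a}}$ gives $q^{\mathfrak{a}}(1\mid a,\lambda)+q^{\mathfrak{a}}(-1\mid a,\lambda)=1$, and likewise for $\mathfrak{b}$, so setting $\bar{A}_{a}(\lambda)=q^{\mathfrak{a}}(1\mid a,\lambda)-q^{\mathfrak{a}}(-1\mid a,\lambda)$ and $\bar{B}_{b}(\lambda)=q^{\mathfrak{b}}(1\mid b,\lambda)-q^{\mathfrak{b}}(-1\mid b,\lambda)$ one has $|\bar{A}_{a}(\lambda)|\le1$ and $|\bar{B}_{b}(\lambda)|\le1$ for all $a,b,\lambda$.

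Next I would plug the factorization into the conditional expectation, obtaining
\[
\bigl\langle AB\bigr\rangle\mid_{a,b}=\sum_{A,B}AB\cdot q(A,B\mid a,b)=\sum_{\lambda\in\varLambda}p_{\Lambda}(\lambda)\,\bar{A}_{a}(\lambda)\,\bar{B}_{b}(\lambda)\textrm{,}
\]
and then bound the CHSH combination inside the $\lambda$-sum: for each fixed $\lambda$,
\[
\bar{A}_{0}\bar{B}_{0}+\bar{A}_{0}\bar{B}_{1}+\bar{A}_{1}\bar{B}_{0}-\bar{A}_{1}\bar{B}_{1}=\bar{A}_{0}(\bar{B}_{0}+\bar{B}_{1})+\bar{A}_{1}(\bar{B}_{0}-\bar{B}_{1})\le|\bar{B}_{0}+\bar{B}_{1}|+|\bar{B}_{0}-\bar{B}_{1}|=2\max(|\bar{B}_{0}|,|\bar{B}_{1}|)\le2\textrm{,}
\]
using $|\bar{A}_{a}|\le1$ and the identity $|x+y|+|x-y|=2\max(|x|,|y|)$. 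Averaging over $\lambda$ against the weights $p_{\Lambda}(\lambda)$, which sum to $1$, yields $\tilde{E}_{\mathrm{CHSH}}\le2$. As an alternative to this pointwise argument, once the factorization through $\lambda$ is available one can build the event space $\Omega=\varLambda\times\{-1,1\}^{2}\times\{-1,1\}^{2}$ with the product measure $p_{\Lambda}(\lambda)\,q^{\mathfrak{a}}(\alpha_{0}\mid0,\lambda)q^{\mathfrak{a}}(\alpha_{1}\mid1,\lambda)q^{\mathfrak{b}}(\beta_{0}\mid0,\lambda)q^{\mathfrak{b}}(\beta_{1}\mid1,\lambda)$ and the coordinate variables $A_{a},B_{b}$, check that $\langle A_{a}B_{b}\rangle$ reproduces $\bigl\langle AB\bigr\rangle\mid_{a,b}$, and invoke the Bell-CHSH inequality of Appendix A directly.

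I do not expect a real obstacle here. The one point requiring care is the bookkeeping when splitting the product $p^{\mathfrak{a}}p^{\mathfrak{b}}$ and tracing out the states $s^{\mathfrak{a}},t^{\mathfrak{a}},s^{\mathfrak{b}},t^{\mathfrak{b}}$, and making explicit that after this the two sides are coupled \emph{only} through the shared input $\lambda$ — which is precisely the structural feature that licenses the hidden-variable bound and would fail if, say, $p^{\mathfrak{a}}$ retained a dependence on $b$.
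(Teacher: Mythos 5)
Your proposal is correct and follows essentially the same route as the paper: factorize the conditional expectation through the shared data into a product of local expectations lying in $[-1,1]$, bound the CHSH combination pointwise by $2$, and average. The only cosmetic difference is that you marginalize the initial states $s^{\mathfrak{a}},s^{\mathfrak{b}}$ immediately and keep only $\lambda$ as the hidden variable, whereas the paper carries $(\lambda,s^{\mathfrak{a}},s^{\mathfrak{b}})$ through and applies the same pointwise bound before averaging.
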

\begin{proof}
For the conditional expectation we can write

\[
\bigl\langle AB\bigr\rangle\mid_{a,b}=\sum_{\lambda\in\varLambda}\sum_{s^{\mathfrak{a}}\in S^{\mathfrak{a}}}\sum_{s^{\mathfrak{b}}\in S^{\mathfrak{b}}}p_{\Lambda}(\lambda)p_{0}^{\mathfrak{a}}(s^{\mathfrak{a}})p_{0}^{\mathfrak{b}}(s^{\mathfrak{b}})\bigl\langle AB\bigr\rangle_{a,b,\lambda,s^{\mathfrak{a}},s^{\mathfrak{b}}}
\]
with
\[
\bigl\langle AB\bigr\rangle_{a,b,\lambda,s^{\mathfrak{a}},s^{\mathfrak{b}}}=\sum_{A,B\in\{-1,1\}}\sum_{t^{\mathfrak{a}}\in S^{\mathfrak{a}}}\sum_{t^{\mathfrak{b}}\in S^{\mathfrak{b}}}AB\cdot p^{\mathfrak{a}}(A,t^{\mathfrak{a}}\mid a,0,\lambda,s^{\mathfrak{a}})p^{\mathfrak{b}}(B,t^{\mathfrak{b}}\mid0,b,\lambda,s^{\mathfrak{b}})\textrm{,}
\]
which can be interpreted as the output product expectation value for
fixed $a,b\in\{0,1\}$; $\lambda\in\varLambda,s^{\mathfrak{a}}\in S^{\mathfrak{a}},s^{\mathfrak{b}}\in S^{\mathfrak{b}}$.
Reordering gives

\begin{equation}
\bigl\langle AB\bigr\rangle_{a,b,\lambda,s^{\mathfrak{a}},s^{\mathfrak{b}}}=\bigl\langle A\bigr\rangle_{a,\lambda,s^{\mathfrak{a}}}\bigl\langle B\bigr\rangle_{b,\lambda,s^{\mathfrak{b}}}\label{eq:factorization}
\end{equation}
with
\[
\bigl\langle A\bigr\rangle_{a,\lambda,s^{\mathfrak{a}}}=\left(\sum_{A\in\{-1,1\}}\sum_{t^{\mathfrak{a}}\in S^{\mathfrak{a}}}A\cdot p^{\mathfrak{a}}(A,t^{\mathfrak{a}}\mid a,0,\lambda,s^{\mathfrak{a}})\right)\textrm{,}
\]

\[
\bigl\langle B\bigr\rangle_{b,\lambda,s^{\mathfrak{b}}}=\left(\sum_{B\in\{-1,1\}}\sum_{t^{\mathfrak{b}}\in S^{\mathfrak{b}}}B\cdot p^{\mathfrak{b}}(B,t^{\mathfrak{b}}\mid0,b,\lambda,s^{\mathfrak{b}})\right)\textrm{.}
\]
These expressions are the output expectation values of $M^{\mathfrak{a}}$
and $M^{\mathfrak{b}}$ with fixed $a,\lambda,s^{\mathfrak{a}}$,
respectively $b,\lambda,s^{\mathfrak{b}},$ and lie in the interval
$[-1,1]$. So (according Appendix A) the absolute value of the expression
\[
E_{\lambda,s^{\mathfrak{a}},s^{\mathfrak{b}}}
\]
\begin{equation}
=\bigl\langle A\bigr\rangle_{0,\lambda,s^{\mathfrak{a}}}\bigl\langle B\bigr\rangle_{0,\lambda,s^{\mathfrak{b}}}+\bigl\langle A\bigr\rangle_{0,\lambda,s^{\mathfrak{a}}}\bigl\langle B\bigr\rangle_{1,\lambda,s^{\mathfrak{b}}}+\bigl\langle A\bigr\rangle_{1,\lambda,s^{\mathfrak{a}}}\bigl\langle B\bigr\rangle_{0,\lambda,s^{\mathfrak{b}}}-\bigl\langle A\bigr\rangle_{1,\lambda,s^{\mathfrak{a}}}\bigl\langle B\bigr\rangle_{1,\lambda,s^{\mathfrak{b}}}\label{eq:BellCHSHofProducts}
\end{equation}
will be less or equal $2$ for all $\lambda\in\varLambda,s^{\mathfrak{a}}\in S^{\mathfrak{a}},s^{\mathfrak{b}}\in S^{\mathfrak{b}}$.
Hence,
\[
\bigl\langle AB\bigr\rangle\mid_{0,0}+\bigl\langle AB\bigr\rangle\mid_{0,1}+\bigl\langle AB\bigr\rangle\mid_{1,0}-\bigl\langle AB\bigr\rangle\mid_{1,1}
\]
\begin{eqnarray*}
 & = & \bigl|\sum_{\lambda\in\varLambda}\sum_{s^{\mathfrak{a}}\in S^{\mathfrak{a}}}\sum_{s^{\mathfrak{b}}\in S^{\mathfrak{b}}}E_{\lambda,s^{\mathfrak{a}},s^{\mathfrak{b}}}p_{\varLambda}(\lambda)p_{0}^{\mathfrak{a}}(s^{\mathfrak{a}})p_{0}^{\mathfrak{b}}(s^{\mathfrak{b}})\bigr|\\
 & \leq & \sum_{\lambda\in\varLambda}\sum_{s^{\mathfrak{a}}\in S^{\mathfrak{a}}}\sum_{s^{\mathfrak{b}}\in S^{\mathfrak{b}}}\bigl|E_{\lambda,s^{\mathfrak{a}},s^{\mathfrak{b}}}\bigr|p_{\varLambda}(\lambda)p_{0}^{\mathfrak{a}}(s^{\mathfrak{a}})p_{0}^{\mathfrak{b}}(s^{\mathfrak{b}})\\
 & \leq & \sum_{\lambda\in\varLambda}\sum_{s^{\mathfrak{a}}\in S^{\mathfrak{a}}}\sum_{s^{\mathfrak{b}}\in S^{\mathfrak{b}}}2p_{\varLambda}(\lambda)p_{0}^{\mathfrak{a}}(s^{\mathfrak{a}})p_{0}^{\mathfrak{b}}(s^{\mathfrak{b}})=2\textrm{.}
\end{eqnarray*}
\end{proof}
\begin{example}
The machine functions of the FPSM pairs $(M_{1}^{\mathfrak{a}},M_{1}^{\mathfrak{b}})$
and $(M_{2}^{\mathfrak{a}},M_{2}^{\mathfrak{b}})$ in example 2 are
independent of both inputs. So they fulfill the Bell-CHSH inequality. 
\end{example}

\subsection{Bell's theorem }

The validity of the Bell-CHSH inequality is a logical consequence
of the functional independence of\emph{ }$p^{\mathfrak{a}}$ from
input $b$ and\emph{ }$p^{\mathfrak{b}}$ from input $a$. So the
violation of this inequality implies that there is some \emph{functional
dependence} instead: 
\begin{prop*}
For any pair of FPSMs $(M^{\mathfrak{a}},M^{\mathfrak{b}})$, defined
as above, which violates the Bell-CHSH inequality in the Bell test
simulation, the machine $M^{\mathfrak{a}}$ (the probabilistic machine
function $p^{\mathfrak{a}}$) depends on the selection input $b$
or the machine $M^{\mathfrak{b}}$ (the probabilistic machine function
$p^{\mathfrak{b}}$) depends on selection input $a$.
\end{prop*}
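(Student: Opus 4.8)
The plan is to prove the statement by contraposition, invoking the preceding proposition essentially verbatim. By De~Morgan, the negation of the conclusion ``$p^{\mathfrak{a}}$ depends on the selection input $b$ or $p^{\mathfrak{b}}$ depends on the selection input $a$'' is precisely the conjunction ``$p^{\mathfrak{a}}$ does not depend on $b$ and $p^{\mathfrak{b}}$ does not depend on $a$'', and this conjunction, spelled out as the two families of equalities for $p^{\mathfrak{a}}$ and $p^{\mathfrak{b}}$, is exactly the hypothesis of the previous proposition. So the two results are contrapositives of one another, and nothing new has to be computed.

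Concretely, I would first assume, for contradiction, that the pair $(M^{\mathfrak{a}},M^{\mathfrak{b}})$ violates the Bell-CHSH inequality in the Bell test simulation (i.e.\ $\tilde{E}_{\mathrm{CHSH}}>2$, equivalently the long-run empirical $\overline{E}_{\mathrm{CHSH}}$ exceeds $2$), while at the same time $p^{\mathfrak{a}}$ is functionally independent of $b$ and $p^{\mathfrak{b}}$ is functionally independent of $a$ in the exact sense of the displayed equalities. Then the hypotheses of the preceding proposition are satisfied, so its conclusion applies:
\[
\bigl\langle AB\bigr\rangle\mid_{0,0}+\bigl\langle AB\bigr\rangle\mid_{0,1}+\bigl\langle AB\bigr\rangle\mid_{1,0}-\bigl\langle AB\bigr\rangle\mid_{1,1}\;\leq\,2\textrm{.}
\]
This contradicts the assumed violation, so the assumed functional independence from the remote inputs is impossible; hence $p^{\mathfrak{a}}$ depends on $b$ or $p^{\mathfrak{b}}$ depends on $a$, which is the claim.

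There is essentially no analytic content here, so the ``main obstacle'' is purely one of careful bookkeeping: one must make sure that ``depends on input $b$'' is taken to mean exactly the failure, for some admissible arguments $A,a,\lambda,s^{\mathfrak{a}},t^{\mathfrak{a}}$, of the chain of equalities $p^{\mathfrak{a}}(A,t^{\mathfrak{a}}\mid a,b,\lambda,s^{\mathfrak{a}})=p^{\mathfrak{a}}(A,t^{\mathfrak{a}}\mid a,0,\lambda,s^{\mathfrak{a}})=p^{\mathfrak{a}}(A,t^{\mathfrak{a}}\mid a,1,\lambda,s^{\mathfrak{a}})$ (and symmetrically for $p^{\mathfrak{b}}$ and input $a$), so that the hypothesis of this proposition is the literal logical negation of the hypothesis of the previous one. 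Once those definitions are pinned down, the contrapositive is immediate and the proof is a single line of reference to the preceding proposition.
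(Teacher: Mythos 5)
Your proposal is correct and matches the paper's own reasoning exactly: the paper presents this proposition as the contrapositive of the preceding one (``the violation of this inequality implies that there is some functional dependence instead''), with no additional computation. Your careful remark that ``depends on $b$'' must be read as the literal negation of the displayed independence equalities is precisely the bookkeeping the paper leaves implicit.
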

\noindent In this case the circuit diagram has to contain at least
one of the vertical connections (dotted lines in fig.\ \ref{fig:Simulation2FPSMsNoconn}).
\begin{example}
The FPSM pair $(M_{3}^{\mathfrak{a}},M_{3}^{\mathfrak{b}})$ in example
2 violates the Bell-CHSH inequality. The machine function $p_{3}^{\mathfrak{b}}=\delta_{B,1-2ab}$
depends on the input $a$. 
\end{example}

\subsection{Notes }
\begin{enumerate}
\item The expectation values in the RHS of (\ref{eq:factorization}) depend
only of one selection input $a$, respectively $b$. This has the
consequence that the expression (\ref{eq:BellCHSHofProducts}) has
only four variables, instead instead of eight. So the Bell-CHSH inequality
is fulfilled.
\item The proof will work even if we replace the product $p_{0}^{\mathfrak{a}}(s^{\mathfrak{a}})p_{0}^{\mathfrak{b}}(s^{\mathfrak{b}})$
with a joint probability distribution function $p_{0}^{\mathfrak{ab}}(s^{\mathfrak{a}},s^{\mathfrak{b}})$
on $S^{\mathfrak{a}}\times S^{\mathfrak{b}}$, where the initial state
distributions of the two machines may be not mutually independent
(this could be achieved by preprocessing of some $\lambda$-input).
Also in that case, the compound FPSM 
\[
M^{\mathfrak{ab}}=(\{0,1\}^{2}\times\varLambda,\{-1,1\}^{2},S^{\mathfrak{a}}\times S^{\mathfrak{b}},p_{0}^{\mathfrak{ab}},p^{\mathfrak{a}}p^{\mathfrak{b}})\textrm{.}
\]
fulfills the Bell-CHSH inequality. This shows that there is a significant
difference between the \emph{entanglement} of quantum states (which
can lead to a violation of the Bell-CHSH inequality in a similar situation,
cf.\ app.\ B) and the ordinary correlation of machine states (which
cannot).
\item With some additional measure theoretic assumptions the theorem can
also be proven for infinite systems. But this is more relevant for
physical theories then for automata theory.
\end{enumerate}

\section{Conclusion}

The Bell theorem for finite probabilistic sequential machines shows
that some data transmission between the separated machines is necessary
to reproduce the statistical results of the quantum-physical Bell
test experiment. The proof is simple and transparent. 

It sheds some light on the physical Bell theorem if we add some ontological
hypothesis, for example: any pair of locally separated physical systems
can be replaced by a pair of such machines. Then a ``spooky'' information
transmission over distances has to be assumed to explain the experimental
results (see \citealt{Gisin2008}). 

But the automata-theoretic version of the theorem has a value in itself.
It sets some limits for networks of probabilistic sequential machines
that are used for the description of communication devices. These
limits can be exceeded by quantum devices (cf.\ app.\ B), which
empowers quantum-cryptographic protocols (e.g., \citealp{Ekert1991}). 

\bibliographystyle{plainnat}

\begin{thebibliography}{14}
\providecommand{\natexlab}[1]{#1}
\providecommand{\url}[1]{\texttt{#1}}
\expandafter\ifx\csname urlstyle\endcsname\relax
  \providecommand{\doi}[1]{doi: #1}\else
  \providecommand{\doi}{doi: \begingroup \urlstyle{rm}\Url}\fi

\bibitem[Aspect et~al.(1982)Aspect, Dalibard, and Roger]{Aspect1982}
A.~Aspect, J.~Dalibard, and G.~Roger.
\newblock {Experimental test of Bell's inequalities using time-varying
  analyzers}.
\newblock \emph{Phys. Rev. Lett.}, 49, 1982.
\newblock \doi{10.1103/PhysRevLett.49.91}.

\bibitem[Bell(1964)]{Bell1964}
J.~S. Bell.
\newblock {On the Einstein-Podolsky-Rosen paradox}.
\newblock \emph{Physics}, 1, 1964.
\newblock \doi{10.1017/CBO9780511815676.004}.

\bibitem[Bell(1981)]{Bell1981}
J.~S. Bell.
\newblock {Bertlmann's socks and the nature of reality}.
\newblock \emph{Journal de Physique}, 42, 1981.
\newblock \doi{10.1017/CBO9780511815676.018}.

\bibitem[Bohm(1951)]{Bohm1951}
D.~Bohm.
\newblock \emph{{Quantum Theory}}.
\newblock Prentice-Hall, Inc., 1951.

\bibitem[Cirel'son(1980)]{Cirelson1980}
B.~S. Cirel'son.
\newblock {Quantum Generalizations of Bell's Inequality}.
\newblock \emph{Lett. Math. Phys.}, 4, 1980.
\newblock \doi{10.1007/BF00417500}.

\bibitem[Clauser et~al.(1969)Clauser, Horne, Shimony, and Holt]{CHSH1969}
J.~F. Clauser, M.~A. Horne, A.~Shimony, and R.~A. Holt.
\newblock {Proposed experiment to test local hidden-variable theories}.
\newblock \emph{Phys. Rev. Lett.}, 23, 1969.
\newblock \doi{10.1103/PhysRevLett.23.880}.

\bibitem[Einstein et~al.(1935)Einstein, Podolsky, and Rosen]{EPR1935}
A.~Einstein, B.~Podolsky, and N.~Rosen.
\newblock {Can quantum-mechanical description of physical reality be considered
  complete?}
\newblock \emph{Phys. Rev.}, 47, 1935.
\newblock \doi{10.1103/PhysRev.47.777}.

\bibitem[Ekert(1991)]{Ekert1991}
A.~K. Ekert.
\newblock {Quantum cryptography based on Bell's theorem}.
\newblock \emph{Phys. Rev. Lett.}, 67, 1991.
\newblock \doi{10.1103/PhysRevLett.67.661}.

\bibitem[Gill(2014)]{Gill2014}
R.~Gill.
\newblock {Statistics, Causality and Bell's Theorem}.
\newblock \emph{arXiv:1207.5103}, 2014.

\bibitem[Gisin et~al.(2008)Gisin, Salart, Baas, Branciard, and
  Zbinden]{Gisin2008}
N.~Gisin, D.~Salart, A.~Baas, C.~Branciard, and H.~Zbinden.
\newblock {Testing spooky action at a distance}.
\newblock \emph{Nature}, 454, 2008.
\newblock \doi{10.1038/nature07121}.

\bibitem[Salomaa(1969)]{Salomaa1969}
A.~Salomaa.
\newblock \emph{{Theory of Automata}}.
\newblock Pergamon Press, Oxford, 1969.

\bibitem[Say and Yakaryilmaz(2014)]{SayYakaryilmaz2014}
A.~C.~C. Say and A.~Yakaryilmaz.
\newblock {Quantum finite automata: A modern introduction}.
\newblock \emph{LNCS 8808, arXiv:1406.4048v1[cs.FL]}, 2014.

\bibitem[Shimony(2016)]{Shimony2013}
A.~Shimony.
\newblock Bell's theorem.
\newblock In E.~N. Zalta, editor, \emph{The Stanford Encyclopedia of
  Philosophy}. Metaphysics Research Lab, Stanford University, 2016.
\newblock URL
  \url{https://plato.stanford.edu/archives/win2016/entries/bell-theorem/}.

\bibitem[Stapp(1975)]{Stapp75}
H.~P. Stapp.
\newblock {Bell's theorem and world process}.
\newblock \emph{Nuovo Cimento.}, 29B, 1975.
\newblock \doi{10.1007/BF02728310}.

\end{thebibliography}

\appendix

\section{Bell-CHSH inequality}
\begin{prop*}
Any four real numbers $A_{0},A_{1},B_{0},B_{1}\in[-1,1]$ fulfill
the Bell-CHSH inequality

\begin{equation}
\bigl|A_{0}B_{0}+A_{0}B_{1}+A_{1}B_{0}-A_{1}B_{1}\bigr|\leq2\textrm{.}\label{eq:BellCHSH}
\end{equation}
\end{prop*}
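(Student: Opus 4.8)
The plan is to prove the inequality by regrouping the four terms so that the two ``$A$'' factors are pulled out, thereby collapsing a bound over the four-dimensional cube $[-1,1]^{4}$ to a one-dimensional estimate. First I would rewrite
\[
A_{0}B_{0}+A_{0}B_{1}+A_{1}B_{0}-A_{1}B_{1}=A_{0}(B_{0}+B_{1})+A_{1}(B_{0}-B_{1})\textrm{.}
\]
Applying the triangle inequality and $|A_{0}|\le1$, $|A_{1}|\le1$ gives
\[
\bigl|A_{0}B_{0}+A_{0}B_{1}+A_{1}B_{0}-A_{1}B_{1}\bigr|\le|B_{0}+B_{1}|+|B_{0}-B_{1}|\textrm{.}
\]

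It then remains to bound the right-hand side by $2$. Here I would invoke the elementary identity $|x+y|+|x-y|=2\max(|x|,|y|)$, valid for all real $x,y$: assuming without loss of generality $|x|\ge|y|$, both $x+y$ and $x-y$ carry the sign of $x$, so their absolute values add up to $|(x+y)+(x-y)|=2|x|$. With $x=B_{0}$, $y=B_{1}$ and $|B_{0}|,|B_{1}|\le1$ this yields $|B_{0}+B_{1}|+|B_{0}-B_{1}|=2\max(|B_{0}|,|B_{1}|)\le2$, completing the argument. (As a fallback avoiding the identity, one can instead observe that the expression is affine in each variable separately, so its extrema over the convex cube $[-1,1]^{4}$ occur at vertices with all coordinates $\pm1$; a finite check of the $16$ sign patterns, cut down by symmetry to essentially the single relevant case giving $\pm2$, then finishes it.)

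I do not expect a genuine obstacle: this is a routine inequality. The only point warranting a moment's care is the one-dimensional step $|B_{0}+B_{1}|+|B_{0}-B_{1}|\le2$, i.e.\ the fact that $B_{0}+B_{1}$ and $B_{0}-B_{1}$ cannot both be near $2$ in absolute value, which the $\max$ identity makes precise. It is worth remarking that the specific sign pattern $(+,+,+,-)$ on the four products is what makes the bound $2$ rather than $4$; the factorization above is exactly the device that exposes why.
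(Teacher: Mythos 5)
Your proof is correct, and your primary argument takes a different route from the paper's. The paper argues that the expression $A_{0}B_{0}+A_{0}B_{1}+A_{1}B_{0}-A_{1}B_{1}$ is linear (affine) in each variable, so its extrema over the cube $[-1,1]^{4}$ are attained at vertices with $A_{0},A_{1},B_{0},B_{1}\in\{-1,1\}$, where the factorization $A_{0}(B_{0}+B_{1})+A_{1}(B_{0}-B_{1})$ shows the value is $\pm2$ (one of $B_{0}+B_{1}$, $B_{0}-B_{1}$ vanishes and the other is $\pm2$) --- this is exactly your ``fallback'' argument. Your main argument instead works directly on all of $[-1,1]^{4}$: the same factorization, the triangle inequality with $|A_{0}|,|A_{1}|\le1$, and the identity $|x+y|+|x-y|=2\max(|x|,|y|)$, which you justify correctly. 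What your route buys is self-containedness: it needs no appeal to the fact that a multilinear (affine-in-each-variable) function on a box attains its extrema at vertices, and it isolates the real content of the bound in the one-dimensional estimate $|B_{0}+B_{1}|+|B_{0}-B_{1}|\le2$. The paper's route buys brevity and a finite verification (16 sign patterns, all giving $\pm2$), and the vertex-reduction idea generalizes mechanically to other Bell-type expressions. Your closing remark about the sign pattern $(+,+,+,-)$ being what forces the bound $2$ rather than $4$ is apt and matches the role the factorization plays in the paper's proof as well.
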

\begin{proof}
The expression 
\[
A_{0}B_{0}+A_{0}B_{1}+A_{1}B_{0}-A_{1}B_{1}
\]
is linear in each of the four variables. So its maximum and minimum
are located on a corner of the hyper-cube $[-1,1]^{^{4}}$with $A_{0},A_{1},B_{0},B_{1}\in\{-1,1\}$.
In this case for all 16 possible valuations the following equation
is valid
\[
A_{0}B_{0}+A_{0}B_{1}+A_{1}B_{0}-A_{1}B_{1}=A_{0}\left(B_{0}+B_{1}\right)+A_{1}\left(B_{0}-B_{1}\right)=\pm2\textrm{.}
\]
\end{proof}
\begin{prop*}
Four random variables $A_{0},A_{1},B_{0},B_{1}\colon\varLambda\rightarrow[-1,1]$
on a measure space $(\varLambda,\Sigma)$ fulfill for any probability
measure $\mu$ on $(\varLambda,\Sigma)$ the Bell-CHSH inequality

\[
\bigl|\bigl\langle A_{0}B_{0}+A_{0}B_{1}+A_{1}B_{0}-A_{1}B_{1}\bigr\rangle_{\mu}\bigr|\leq2
\]
where $\bigl\langle\;\bigr\rangle_{\mu}$indicates the expectation
value with the measure $\mu$.
\end{prop*}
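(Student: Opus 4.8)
The plan is to deduce this probabilistic version from the deterministic inequality already established in the first Proposition of Appendix~A, simply by integrating a pointwise bound. The first step is the bookkeeping needed to make the statement meaningful: each product $A_iB_j$ is $\Sigma$-measurable, being a product of two measurable functions, and it is bounded (its values lie in $[-1,1]$), hence $\mu$-integrable; so the four expectations $\langle A_iB_j\rangle_\mu$ and therefore the whole Bell--CHSH expression are well defined, and by linearity of the expectation
\[
\bigl\langle A_0B_0+A_0B_1+A_1B_0-A_1B_1\bigr\rangle_\mu=\langle A_0B_0\rangle_\mu+\langle A_0B_1\rangle_\mu+\langle A_1B_0\rangle_\mu-\langle A_1B_1\rangle_\mu .
\]

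The second step is the pointwise bound. Fixing an arbitrary $\lambda\in\varLambda$, the four real numbers $A_0(\lambda),A_1(\lambda),B_0(\lambda),B_1(\lambda)$ all lie in $[-1,1]$, so the first Proposition of Appendix~A gives
\[
\bigl|A_0(\lambda)B_0(\lambda)+A_0(\lambda)B_1(\lambda)+A_1(\lambda)B_0(\lambda)-A_1(\lambda)B_1(\lambda)\bigr|\le 2 .
\]
Thus the random variable $X:=A_0B_0+A_0B_1+A_1B_0-A_1B_1$ satisfies $|X|\le 2$ at every point of $\varLambda$ (not merely $\mu$-almost everywhere, since this is a purely algebraic fact about numbers in $[-1,1]$).

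The final step combines these: using the triangle inequality for the integral and monotonicity of the expectation, together with the fact that $\mu$ is a probability measure,
\[
\bigl|\langle X\rangle_\mu\bigr|\le\langle|X|\rangle_\mu\le\langle 2\rangle_\mu=2 ,
\]
which is the claim. I do not expect any real obstacle here; the only points requiring care are the measurability/integrability remarks that legitimize $\langle\,\cdot\,\rangle_\mu$, and the observation that the deterministic bound of Appendix~A holds for \emph{every} $\lambda$, so that the pointwise inequality $|X|\le 2$ may be integrated directly. The entire content of the argument is ``integrate a pointwise inequality,'' and the same scheme already appears, in a more elaborate form, in the proof of the Proposition in Section~4.3.
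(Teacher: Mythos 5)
Your proof is correct and follows essentially the same route as the paper: define the pointwise Bell--CHSH combination, bound its absolute value by $2$ for every $\lambda$ using the deterministic inequality (\ref{eq:BellCHSH}), and integrate against the probability measure $\mu$ via the triangle inequality. The added remarks on measurability and integrability are harmless extra bookkeeping beyond what the paper states.
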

\begin{proof}
Let
\[
C(\lambda)=A_{0}(\lambda)B_{0}(\lambda)+A_{0}(\lambda)B_{1}(\lambda)+A_{1}(\lambda)B_{0}(\lambda)-A_{1}(\lambda)B_{1}(\lambda)\textrm{.}
\]
Then because of (\ref{eq:BellCHSH}) $\left|C(\lambda)\right|\leq2$
for all $\lambda\in\Lambda$ and 
\[
\bigl|\bigl\langle C\bigr\rangle_{\mu}\bigr|=\bigl|\intop_{\varLambda}C(\lambda)d\mu\bigr|\leq\intop_{\varLambda}\bigl|C(\lambda)\bigr|d\mu\leq\intop_{\varLambda}2d\mu=2\textrm{.}
\]
\end{proof}

\section{Example of quantum sequential machines violating Bell-CHSH inequality}

A quantum sequential machine can be defined in a similar way as a
probabilistic one. The essential difference is the use of complex-valued\emph{
amplitude} functions instead of non-negative real-valued probability
distribution functions. Calculations with these amplitudes are performed
in very a similar way as with probabilities. At the end of the calculation
the absolute (modulus) square of the resulting amplitude gives the
probability.

We define a finite quantum SM \emph{(FQSM)} as a quintuple $(I,O,S,\psi_{0},\varphi)$,
where $I,O,S$ are finite sets of input symbols, output symbols and
states\footnote{For quantum systems a more general notion of state is used. So we
should call $S$ more exactly the set of configurational states or
computational base states.},
\[
\psi_{0}\colon S\rightarrow\mathbb{C},s\mapsto\psi_{0}(s)
\]
is the \emph{initial state amplitude function} with 
\[
\sum_{s\in S}\bigl|\psi_{0}(s)\bigr|^{2}=1\textrm{,}
\]
and
\[
\varphi\colon O\times S\times I\times S\rightarrow\mathbb{C},(o,t,i,s)\mapsto\varphi(o,t|i,s)
\]
is the \emph{quantum machine function} with
\[
\sum_{o\in O}\sum_{t\in S}\bigl|\sum_{s\in S}\varphi(o,t|i,s)\psi_{0}(s)\bigr|^{2}=1
\]
for all $i\in I,s\in S$. 

The probability to get the output $o$ and the new state $t$\footnote{A \emph{measurement} has to be performed to get these results, but
we will not discuss this here (see for example \citealp{SayYakaryilmaz2014}).
We assume simply, that a measurement in the computational base is
performed after each input. } after the input $i$ in the initial state with amplitude $\psi_{0}$
is

\[
\bigl|\sum_{s\in S}\varphi(o,t|i,s)\psi_{0}(s)\bigr|^{2}\textrm{.}
\]
\begin{example}
Our example is a pair $(Q^{\mathfrak{a}},Q^{\mathfrak{b}})$ of FQSMs
that violates the Bell-CHSH inequality without dependence on the remote
input if it is initialized with a non-product initial state amplitude
function.
\begin{figure}[H]
\noindent \begin{centering}
\includegraphics[scale=0.5]{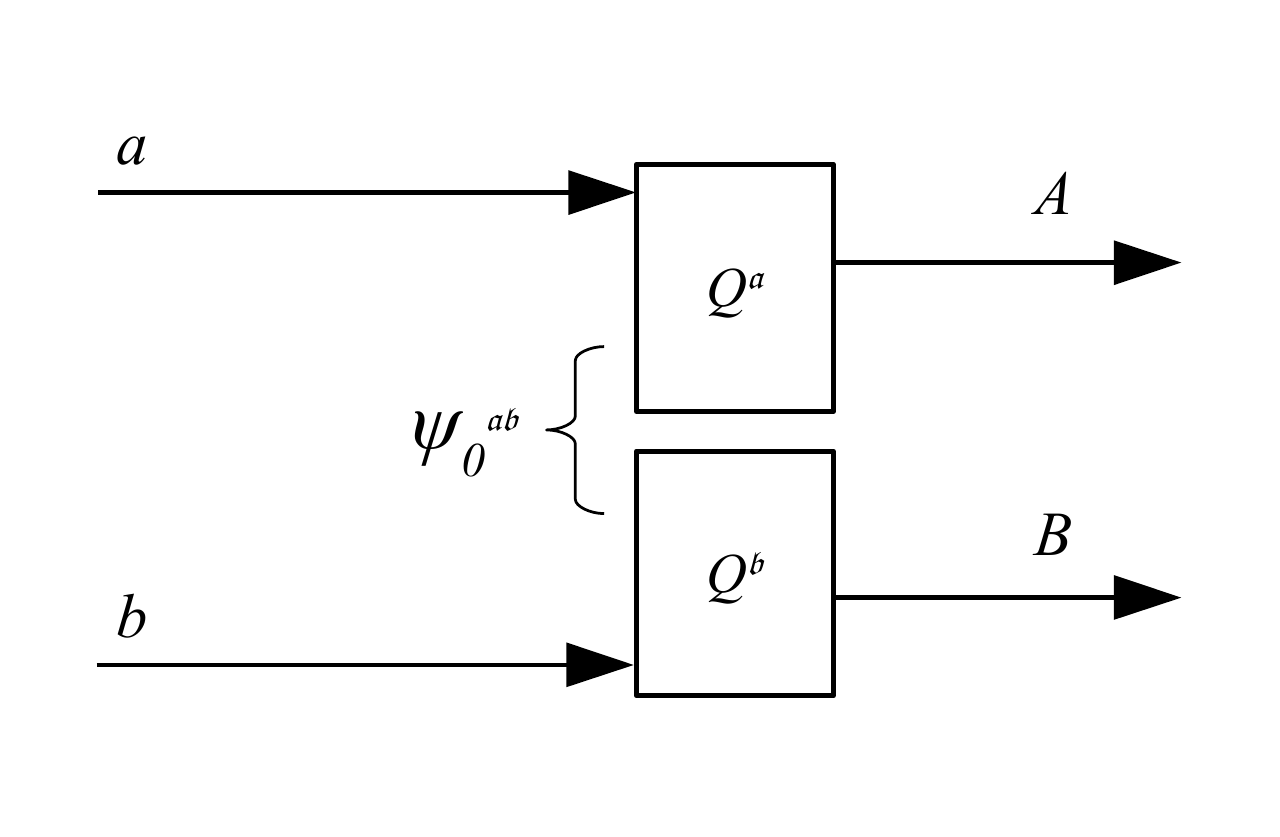}
\par\end{centering}
\caption{Simulation of the Bell experiment with two FQSMs with entangled initial
state amplitude function}
\end{figure}
 Both machines have the input set $I=\{0,1\}$ and the output set
$O=\{-1,1\}$. The state set is $S=\{0,1\}$\textemdash so both machines
are essentially \emph{qubit}s. The FQSMs have the form 
\[
Q^{\mathfrak{a}}=(\{0,1\},\{-1,1\},\{0,1\},\psi_{0}^{\mathfrak{a}},\varphi^{\mathfrak{a}})\textrm{,}
\]
\[
Q^{\mathfrak{b}}=(\{0,1\},\{-1,1\},\{0,1\},\psi_{0}^{\mathfrak{b}},\varphi^{\mathfrak{b}}).
\]
The quantum machine functions $\varphi^{\mathfrak{a}}$, $\varphi^{\mathfrak{b}}$
are (in table form):\\

\begin{tabular}{lcccc}
$\varphi^{\mathfrak{a}}$ & \multirow{1}{*}{$a=0$, $s^{\mathfrak{a}}=0$ } & $a=0$, $s^{\mathfrak{a}}=1$  & $a=1$, $s^{\mathfrak{a}}=0$  & $a=1$, $s^{\mathfrak{a}}=1$\tabularnewline
\midrule
$t^{\mathfrak{a}}=0$, $A=-1$ & $1$ & $0$ & $\frac{1}{\sqrt{2}}$ & $\frac{1}{\sqrt{2}}$\tabularnewline
$t^{\mathfrak{a}}=1$, $A=1$ & 0 & 1 & $\frac{1}{\sqrt{2}}$ & $-\frac{1}{\sqrt{2}}$\tabularnewline
\end{tabular}

\begin{tabular}{lcccc}
$\varphi^{\mathfrak{b}}$ & \multirow{1}{*}{$a=0$, $s^{\mathfrak{b}}=0$ } & $a=0$, $s^{\mathfrak{b}}=1$  & $a=1$, $s^{\mathfrak{b}}=0$  & $a=1$, $s^{\mathfrak{b}}=1$\tabularnewline
\midrule 
$t^{\mathfrak{b}}=0$, $A=-1$ & $-\frac{1}{\sqrt{4+2\sqrt{2}}}$ & $\frac{1+\sqrt{2}}{\sqrt{4+2\sqrt{2}}}$ & $\frac{1}{\sqrt{4+2\sqrt{2}}}$ & $\frac{1+\sqrt{2}}{\sqrt{4+2\sqrt{2}}}$\tabularnewline
$t^{\mathfrak{b}}=1$, $A=1$ & $\frac{1+\sqrt{2}}{\sqrt{4+2\sqrt{2}}}$ & $\frac{1}{\sqrt{4+2\sqrt{2}}}$ & $-\frac{1+\sqrt{2}}{\sqrt{4+2\sqrt{2}}}$ & $\frac{1}{\sqrt{4+2\sqrt{2}}}$\tabularnewline
\end{tabular}\\
\\
Both machines are \emph{Moore machines} where the new state $t^{\mathfrak{a}}$,
respectively $t^{\mathfrak{b}}$, determines the output (i.e., $A=2t^{\mathfrak{a}}-1$,
$B=2t^{\mathfrak{b}}-1$). We omitted rows which contain zeros only
(e.g., $t^{\mathfrak{a}}=0,$ $A=1$). 

The conditional output expectation is 
\[
\bigl\langle AB\bigr\rangle\mid_{a,b}=\sum_{A\in\{-1,1\}}\sum_{B\in\{-1,1\}}AB\cdot q\left(A,B\mid a,b\right)
\]
with the conditional probability to get the output $(A,B)$ after
input $(a,b)$

\[
q\left(A,B\mid a,b\right)=
\]
\[
\bigl|\sum_{s^{\mathfrak{a}}\in\{0,1\}}\sum_{s^{\mathfrak{b}}\in\{0,1\}}\sum_{t^{\mathfrak{a}}\in\{0,1\}}\sum_{t^{\mathfrak{b}}\in\{0,1\}}\varphi^{\mathfrak{a}}(A,t^{\mathfrak{a}}\mid a,s^{\mathfrak{a}})\varphi^{\mathfrak{b}}(B,t^{\mathfrak{b}}\mid b,s^{\mathfrak{b}})\psi_{0}^{\mathfrak{a}}(s^{\mathfrak{a}})\psi_{0}^{\mathfrak{b}}(s^{\mathfrak{b}})\bigr|^{2}\textrm{.}
\]
\\

However, to violate the Bell-CHSH inequality, the product $\psi_{0}^{\mathfrak{a}}(s^{\mathfrak{a}})\psi_{0}^{\mathfrak{b}}(s^{\mathfrak{b}})$
has to be replaced with a non-product (i.e.,\emph{ }entangled) initial
state amplitude function 

\[
\psi_{0}^{\mathfrak{ab}}(s^{\mathfrak{a}},s^{\mathfrak{b}})=\frac{1}{\sqrt{2}}(\delta_{s^{\mathfrak{a}},0}\delta_{s^{\mathfrak{b}},1}-\delta_{s^{\mathfrak{a}},1}\delta_{s^{\mathfrak{b}},0})\textrm{.}
\]
In this case the compound FQSM 
\[
Q^{\mathfrak{ab}}=(\{0,1\}^{2},\{-1,1\}^{2},\{0,1\}^{2},\psi_{0}^{\mathfrak{ab}},\varphi^{\mathfrak{a}}\varphi^{\mathfrak{b}})
\]
gives the following conditional output probability (in table form):\\

\begin{tabular}{lcccc}
$q\left(A,B\mid a,b\right)$ & $a,b=0,0$  & $a,b=0,1$  & $a,b=1,0$ & $a,b=1,1$ \tabularnewline
\midrule
$A,B=(-1,-1)$ & $\frac{2+\sqrt{2}}{8}$ & $\frac{2+\sqrt{2}}{8}$ & $\frac{2+\sqrt{2}}{8}$ & $\frac{2-\sqrt{2}}{8}$\tabularnewline
$A,B=(-1,1)$ & $\frac{2-\sqrt{2}}{8}$ & $\frac{2-\sqrt{2}}{8}$ & $\frac{2-\sqrt{2}}{8}$ & $\frac{2+\sqrt{2}}{8}$\tabularnewline
$A,B=(1,-1)$ & $\frac{2-\sqrt{2}}{8}$ & $\frac{2-\sqrt{2}}{8}$ & $\frac{2-\sqrt{2}}{8}$ & $\frac{2+\sqrt{2}}{8}$\tabularnewline
$A,B=(1,1)$ & $\frac{2+\sqrt{2}}{8}$ & $\frac{2+\sqrt{2}}{8}$ & $\frac{2+\sqrt{2}}{8}$ & $\frac{2-\sqrt{2}}{8}$\tabularnewline
\end{tabular}\\
\\
This is identical with
\[
q_{5}\left(A,B\mid a,b\right)=\frac{2-\sqrt{2}}{8}+\frac{\sqrt{2}}{8}(2\delta_{A,B}+2ab-4ab\delta_{A,B})
\]
from FPSM $M_{5}$ in example 1 and gives the Tsirelson bound $2\sqrt{2}$
as expectation value of the Bell-CHSH expression. 
\end{example}

\end{document}